\DeclareMathOperator{\sign}{sign}
\DeclarePairedDelimiterX{\norm}[1]{\lVert}{\rVert}{#1}
\newcommand*{\rom}[1]{\expandafter\@slowromancap\romannumeral #1@}
\newtheorem{lemma}{Lemma} 
\newtheorem{theorem}{Theorem}
\title{\vspace{-0.5in} Shaping the Transient Response of  Nonlinear Systems\\to Satisfy a Class of Integral Constraints}
\author[1]{Farzad Aalipour}
\author[1]{Tuhin Das}
\affil[1]{Department of Mechanical and Aerospace Engineering, University of Central Florida,}
\affil[ ]{Orlando, FL 32816, USA. E-mail: farzad.aalipour@knights.ucf.edu, tdas@ucf.edu}
\date{}    
\begin{document}

\maketitle
\vspace{-0.5in}

\begin{abstract}
We consider the problem of shaping the transient step response of nonlinear systems to satisfy a class of integral constraints. Such constraints are inherent in hybrid energy systems consisting of energy sources and storage elements. While typical transient specifications aim to minimize overshoot, this problem is unique in that it requires the presence of an appreciable overshoot to satisfy the foregoing constraints. The problem was previously studied in the context of stable linear systems. A combined integral and feedforward control, that requires minimal knowledge of the plant model, was shown to make the system amenable to meeting such constraints. This paper extends that work to nonlinear systems and proves the effectiveness of the same compensation structure under added conditions. Broadly, it is shown that the integral constraint is satisfied when this compensation structure is applied to nonlinear systems with stable open-loop step response and a positive DC gain. However, stability of the resulting closed-loop system mandates bounds on the controller gain.
\vspace{-0.05in}
\end{abstract}

\section{Introduction} \label{sec: introduction}
\vspace{-0.05in}
In many control systems, having a desired transient response is one of the main objectives. In particular, time domain specifications, control signal magnitude, controller complexity, overshoots/ undershoots and satisfying predefined constraints, are important considerations in transient characteristics, \cite{davison2005transient}. The studies \cite{Liou1966novel,Aaron1965calculation} are among the first works on the transient response of rational transfer functions.

Different approaches to shape transient response through compensator design appear in \cite{Mohsenizadeh2011Synthesis,Kim2003Transient,moore1990}. Feedforward techniques have also been proposed to shape the transient response of linear systems for the cases of tracking an input reference. These techniques include inversion-based feedforward, \cite{GRAICHEN2005new,Kim2003Transient}. In the presence of parametric uncertainties, these methods violate the imposed constraints and can result in deterioration of performance, \cite{zhao1995feedforward}. Another issue with the inversion-based feedforward compensation is that not all of the zeros are cancellable, for instance for non-minimum phase systems, a remedy for which is discussed in \cite{clayton2009}. Inserting additional zeros in the  feedforward path for reducing the tracking errors is investigated in \cite{10.1115/1.2896362}. 

To improve the transient response of a servosystem obtained in a mode switching control, \cite{YAMAGUCHI1998Control} proposes a method of giving an additional input in the form of an impulse response. Specific transient characteristics, such as the number of extrema in step responses are studied in \cite{hauksdottir1996analytic}. In \cite{Newman2018Design}, the open-loop nature of input shaping is considered for vibration control of flexible systems, in the presence of known and finite duration disturbances. Input shaping has been extensively used for vibration control (see \cite{conker2016} and references therein). Transient response of nonlinear systems is addressed in fewer works in the literature, with some examples being  \cite{bechlioulis2010prescribed,Fan2018Asymptotic}. Constraints imposed on transient response are soft or hard. Soft constraints are similar to imposing a desirable second order performance on the controlled system, and hard constraints can be maintaining the output limited in a specific narrow range, \cite{davison2005transient}. Model predictive control is one of the approaches to handle hard constraints, \cite{mayne2001control}. This paper, like most works referenced above, deals with soft constraints.

In this paper, we study the problem of satisfying a class of integral constraints imposed on the step response of nonlinear systems. The problem is relevant in hybrid power systems where power resources and Energy Storage Systems (ESSs) are combined. For instance, in hybrid fuel cell and ultra-capacitor systems, load-following ability is directly correlated to satisfying these constraints. The load following mode for hybrid energy systems is discussed in \cite{das2012adaptive,das2011robust}. In this mode, the ESS provides or absorbs power immediately following an abrupt fluctuation in power demand, while the power source follows the load more gradually. Concurrently, a combined feedback and feedforward compensation guarantees that the ESS's State-Of-Charge (SOC) is maintained within safe limits. Fundamentally, this SOC control translates to satisfying the integral constraints mentioned above. In \cite{Salih2019AdaptiveFC,Salih2020Integral}, the compensation was studied in the context of linear systems and the concept of {\it Integral Controllability}, \cite{morari1985robust,Salih2020Integral}, was revisited. In this paper, we extend the work in \cite{Salih2020Integral} to nonlinear systems. We consider nonlinear systems that provide a stable step response at a fixed DC gain and derive conditions under which the compensated nonlinear system satisfies a desired integral constraint. 

The rest of the paper is organized as follows; In \cref{sec_back}, the constraints are explained, the viability of the aforementioned compensation structure is established for nonlinear systems and an example is given. Next, stability analysis of the compensated system is conducted in \cref{sec: scalar}, for first order plants. In \cref{sec_high}, the analysis is extended to higher order nonlinear plants and an example is provided. Subsequently concluding remarks are made and references are listed. 
\vspace{-0.05in}

\section{Problem Definition and Extension to Nonlinear Systems}
\label{sec_back}
\vspace{-0.1in}
As discussed in the Introduction, transient responses are of importance in engineering applications such as hybrid energy systems. The problem of shaping the transient step response of a linear system to satisfy an additional integral constraint was addressed in \cite{Salih2020Integral}. In this paper, we extend the problem to nonlinear systems. An illustration of the transient response is shown in \cref{Fig: intended step response}. In this figure, $r$ is a step input and $y$ is the response of a dynamical system having a unity DC gain.
\begin{figure}[htbp]
	\begin{center} 
		\includegraphics[width=0.6\textwidth]{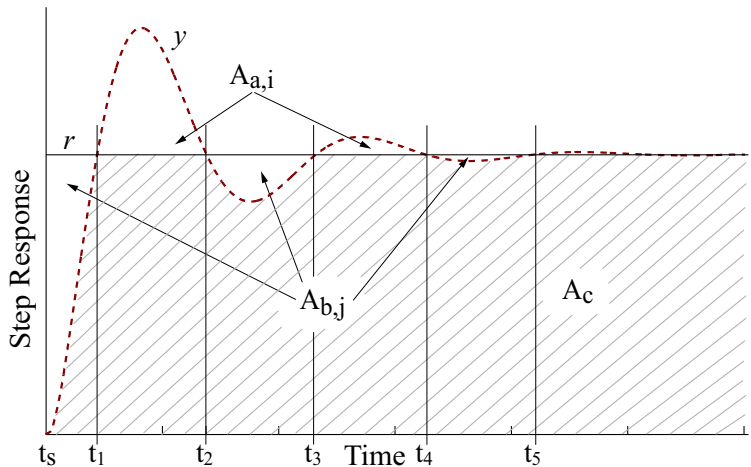}
	\end{center}
	\vspace{-0.2in}
	\caption{Transient step response with areas defining the integral constraint}
	\vspace{-0.1in}
	\label{Fig: intended step response} 
\end{figure}

There exist three different types of areas shown in \cref{Fig: intended step response}, namely $A_{a,i}$,  $A_{b,j}$, and $A_c$. The areas $A_{a,i}$ are located above the step input $r$ and below the response $y$. The areas $A_{b,j}$ are confined below $r$ and above $y$. $A_c$ refers to the common and shaded area below both $r$ and $y$. Considering these three categories, we can define 
\begin{equation*}
A_a = -\frac{1}{2} \int_{t_s}^{\infty} \Bigl\{ 1 - \sign(r-y) \Bigr\} (r - y)dt \quad \Rightarrow \quad A_a = \sum_{i=1}^{\infty} A_{a,i},
\label{eq_integ}
\end{equation*}
\begin{equation*}
A_b = \frac{1}{2} \int_{t_s}^{\infty} \Bigl\{ 1 + \sign(r-y) \Bigr\} (r - y)dt \quad \Rightarrow \quad A_b = \sum_{j=1}^{\infty} A_{b,j}
\label{eq_integ1}
\end{equation*}
and,
\begin{equation*}
A_c = \frac{1}{2} \int_{t_s}^{\infty} \Bigl\{ 1 - \sign(r-y) \Bigr\} r\,dt + \frac{1}{2} \int_{t_s}^{\infty} \Bigl\{ 1 + \sign(r-y) \Bigr\} y\,dt
\label{eq_integ2}
\end{equation*}
We can verify that
\begin{equation*}
\int_{t_s}^{\infty} y\,dt = A_{a} + A_{c}, \quad \int_{t_s}^{\infty} r\,dt = A_{b} + A_{c}.
\label{eq_integ3}
\end{equation*}
 The goal of this paper is to shape the transient step response $y$ to satisfy the integral constraint $A_b = A_a$. Such a constraint finds application in hybrid energy systems. To exemplify a practical scenario, let us assume a hybrid power system consisting of a power source and a storage unit. At an instant, if load demand increases, then the storage unit provides the extra power until the source adapts to the new power level following a transient response. In the context of \cref{Fig: intended step response}, consider the step input $r$ to represent a sudden change in power demand and $y$ to represent the response of the power source. The contribution of the storage unit can be visualized as the difference $(r - y)$. Thus, following a sudden change in demand, the storage unit provides a surge in power while the source ramps its power supply gradually. To compensate for an excess charge/discharge accrued by the storage unit during the transient, and to satisfy the load demand simultaneously, the output $y$ needs to attain $r$ at steady state while satisfying the integral constraint $A_b=A_a$ over its transient. In this case, we have from above,
\begin{equation} \label{eq_eta3}
A_a = A_b \Rightarrow \int_{t_s}^{\infty} y\,dt = \int_{t_s}^{\infty} r\,dt \Rightarrow \int_{t_s}^{\infty} e\,dt = 0 \Rightarrow e_r = 0 
\end{equation}
 where, we define $e=r-y$, and $e_r = \int_{t_s}^{\infty} e\,dt$. Additionally, practical considerations such as energy losses require that the transient response can also satisfy the following integral constraints
 \begin{equation} \label{eq_eta3a}
A_a > A_b \Rightarrow \int_{t_s}^{\infty} y\,dt > \int_{t_s}^{\infty} r\,dt \Rightarrow \int_{t_s}^{\infty} e\,dt < 0 \Rightarrow e_r < 0, \; \mbox{and similarly} \; A_a < A_b \Rightarrow e_r > 0
\end{equation}
 In this regard, it was shown in \cite{Salih2020Integral} that for linear systems, by utilizing the compensation structure of \cref{Fig: Feedback Structure} and using $\alpha$ as the inverse DC gain of the plant, the compensated system tracks any step input while satisfying the integral constraint of \cref{eq_eta3}. Further, by varying $\alpha$ the constraints of \cref{eq_eta3a} can be satisfied. In this study, we consider the plant to have nonlinear dynamics, thereby admitting more realistic models of power systems.
\begin{figure}[t]
	\begin{center} 
		\includegraphics[width=0.6\textwidth]{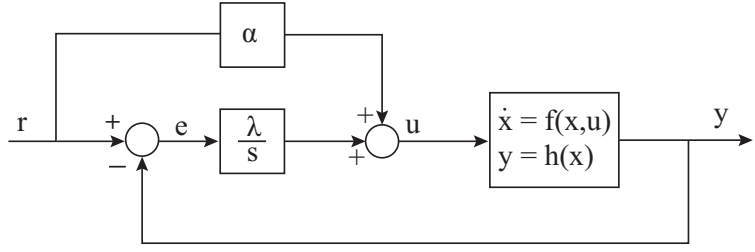}
	\end{center}
	\vspace{-0.1in}
	\caption{Compensation structure}
	\label{Fig: Feedback Structure} 
\end{figure}
We next prove that the compensation shown in \cref{Fig: Feedback Structure} also satisfies the integral constraint for nonlinear systems under certain conditions. In this context, we consider the following nonlinear system as the original open-loop nonlinear system to evaluate the compensation strategy depicted in \cref{Fig: Feedback Structure} for,   
\begin{equation} \label{eq: original nonlinear system}
\dot{x}=f(x,u), \quad y=h(x)
\end{equation}
where $x \in R^n$ is the state, $u\in R$ is the input, $y \in R$ is the output. In the closed-loop system shown in \cref{Fig: Feedback Structure}, $r$ is a step input and $\lambda \in R$ is a positive scalar. The system equations are as follows,
\begin{equation} \label{Eq: closed loop system equ}
\dot{x}=f(x,u), \quad u =\alpha r+\lambda \int_{t_s}^{t} e(\tau) d\tau, \quad e =r-y, \quad t \ge t_s
\end{equation}
We now state and prove the following lemma:
\begin{lemma}
Assume \cref{eq: original nonlinear system} has a unique and asymptotically stable equilibrium point $x_u$, for a step input $u$. In addition, when $x=x_u$, let $y=h(x_u)=u/k$, where $1/k \in R^+$ is the DC gain of \cref{eq: original nonlinear system}. Then, the equilibrium point of the closed-loop system illustrated in \cref{Fig: Feedback Structure}, with the dynamic of \cref{Eq: closed loop system equ} and $\lambda > 0$, has the following properties,
\begin{equation}
y=\frac{u}{k}=r, \quad \mbox{and} \quad  \int_{t_s}^{\infty} e(\tau) d\tau= r \left( \frac{k - \alpha}{\lambda} \right)
\label{Eq: Lemma1Equation}
\end{equation}
\label{lem1}
\end{lemma}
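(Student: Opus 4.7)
The plan is to argue purely from the algebraic characterization of an equilibrium point of the closed-loop system \cref{Eq: closed loop system equ}, leveraging the hypothesis on the open-loop DC gain. Since the lemma asserts only properties that the equilibrium \emph{must} satisfy (existence and attractiveness of that equilibrium are taken up in the later sections on stability), no convergence argument is needed here.

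First I would write the closed-loop dynamics as an augmented system by introducing an auxiliary state $z(t) = \int_{t_s}^{t} e(\tau)\,d\tau$, so that $\dot z = e = r - y$ and $u = \alpha r + \lambda z$. An equilibrium of the augmented $(x,z)$-system requires both $\dot x = 0$ and $\dot z = 0$. The second condition immediately gives $e = 0$, hence $y = r$ at equilibrium, which is the first claim. The key observation is that $\dot z = 0$ forces $u$ to be constant at the equilibrium, so the equilibrium value of $x$ must coincide with the open-loop equilibrium $x_u$ associated with this constant $u$.

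Next I would invoke the hypothesis: for any constant $u$, the open-loop equilibrium satisfies $h(x_u) = u/k$. Applied at the closed-loop equilibrium, this yields $y = u/k$, and combining with $y = r$ from the previous step gives $u = kr$. Substituting $u = kr$ into the control law $u = \alpha r + \lambda z$ and solving for $z$ yields
\begin{equation*}
z = \frac{kr - \alpha r}{\lambda} = r\left(\frac{k-\alpha}{\lambda}\right),
\end{equation*}
which is exactly the claimed value of $\int_{t_s}^{\infty} e(\tau)\,d\tau$ once we identify $z$ at equilibrium with its limiting value as $t \to \infty$.

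I do not anticipate a genuine obstacle here; the argument is essentially a fixed-point calculation. The only subtle point is conceptual: one must recognize that the integral action in $u$ forces $e = 0$ at any steady state (independently of the nonlinearity $f$), and that the DC-gain hypothesis then pins down the equilibrium value of $u$. The harder questions—whether such an equilibrium is actually reached and under what bounds on $\lambda$—are deferred to \cref{sec: scalar} and \cref{sec_high}, so this lemma is really a consistency statement establishing what the desired steady state looks like.
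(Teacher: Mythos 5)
Your argument is correct and is essentially the paper's own proof: the paper augments the state with $u$ itself (via $\dot u = \lambda e$) rather than with $z = \int_{t_s}^{t} e\,d\tau$, but these differ only by the affine reparameterization $u = \alpha r + \lambda z$, and both derivations proceed identically---$\dot u = 0$ (equivalently $\dot z = 0$) gives $y = r$, the DC-gain hypothesis gives $u = kr$, and solving the control law yields $\int_{t_s}^{\infty} e\,d\tau = r(k-\alpha)/\lambda$. No gaps.
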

\begin{proof}
Considering the closed-loop system depicted in \cref{Fig: Feedback Structure} and \cref{Eq: closed loop system equ}, the state space equations of the closed-loop system are  
\begin{equation} \label{Eq: closed-loop equation}
\dot{x} = f(x,u), \quad \dot{u} = \lambda (r-y) 
\end{equation}
At equilibrium $\dot{u}=0$, implying $y=r$. Then, per the assumption in \cref{lem1}, \cref{Eq: closed loop system equ} has a unique equilibrium point at $x = x_u$ and $y=u/k$. Therefore, at steady-state, we have $y=u/k=r$ and hence from \cref{Eq: closed loop system equ}, $\int_{t_s}^{\infty} e(\tau) d\tau= (u-\alpha r)/\lambda=r(k - \alpha)/\lambda$.
\end{proof}
It is noted that in open-loop, i.e. when $\lambda = 0$, then $\int_{t_s}^{\infty} e(\tau) d\tau$ can only be determined through direct integration, as \cref{Eq: Lemma1Equation} is not applicable. The constraints of \cref{eq_eta3,eq_eta3a} are satisfied by different values of $\alpha$. The assumption of uniqueness of the equilibrium $x = x_u$ and $y = h(x_u) = u/k$ in \cref{lem1} is not restrictive. This is because there are many practical systems that are capable of tracking step inputs either naturally or through a built-in controller, such as the energy systems considered in \cite{Salih2020Integral}. The goal of this work is to design a compensation for such systems so that their transient response is shaped to satisfy the integral constraints described above. While \cref{lem1} describes the steady-state property of the compensated system in \cref{Fig: Feedback Structure}, its stability is affected by the added integral action. This stability analysis is carried out in \cref{sec: scalar,sec_high}. We next provide an example to demonstrate the result of \cref{lem1}.

\vspace{-0.05in}
\subsection{Example} \label{sec: example1}
\vspace{-0.05in}
We consider the following nonlinear system to demonstrate the concept discussed above
\begin{equation*}
    \dot{x} = -20(x-u)^3, \quad y = x
\end{equation*}
The above system has a unique asymptotically stable equilibrium at $x = u$ (i.e. $k = 1/\mbox{DC gain} = 1$) for a step input in $u$. When placed in the compensation structure of \cref{Fig: Feedback Structure} with $\lambda = 8$, the responses obtained for different values of $\alpha$ are shown in \cref{Fig: Constraint Example}.
\begin{figure}[htbp]
	\begin{center} 
		\includegraphics[width=\textwidth]{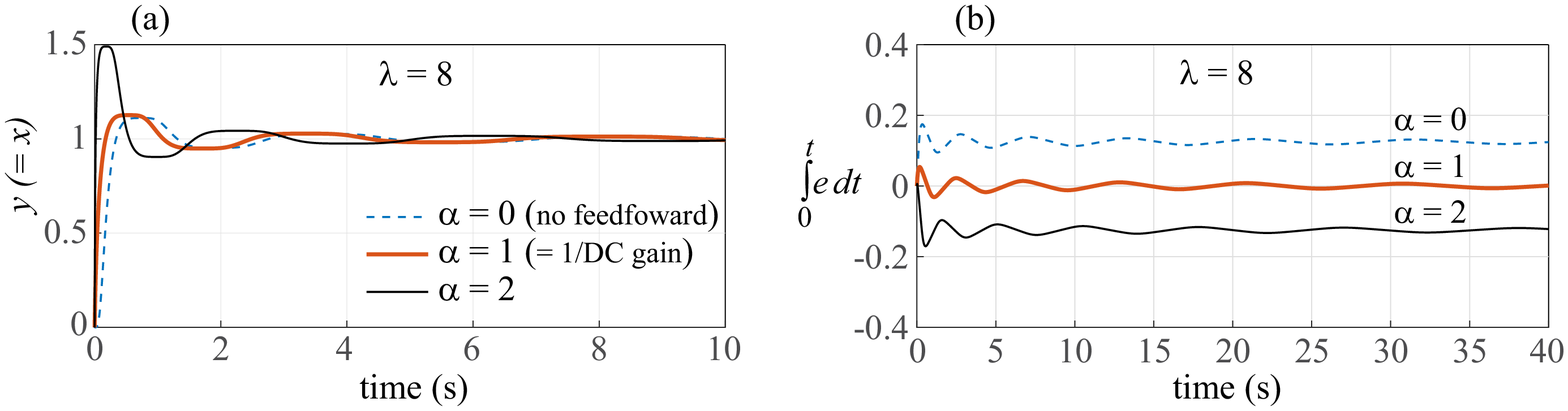}
	\end{center}
	\vspace{-0.25in}
	\caption{Illustrative example of \cref{sec: example1}}
	\vspace{-0.1in}
	\label{Fig: Constraint Example} 
\end{figure}
Figure \ref{Fig: Constraint Example}(a) shows the step response and \cref{Fig: Constraint Example}(b) shows $e_r = \int_{0}^{t} e(\tau) d\tau$. While in all cases $y$ tracks the step input in $r$, the choice $\alpha = 1 (= k)$ ensures $e_r = 0$, and $\alpha = 0$ and $2$ yield $e_r > 0$ and $< 0$ respectively.

\vspace{-0.05in}
\section{Stability Analysis: First Order Nonlinear System} \label{sec: scalar}
\vspace{-0.05in}
In this section, we conduct the stability analysis for the case where the original system has a first order nonlinear dynamics. The analysis will be extended to higher order nonlinear systems in \cref{sec_high}. Without loss of generality, we assume $\alpha=1$ in \cref{Fig: Feedback Structure} and state the following theorem,
\begin{theorem}
Let the nonlinear open-loop system in \cref{eq: original nonlinear system} with $y=h(x)=x$ be globally exponentially stable at its unique equilibrium point $x_u=u$, when $u$ is a step input. Then, the closed-loop system of \cref{Eq: closed-loop equation}, which represents \cref{Fig: Feedback Structure}, has a globally exponentially stable equilibrium at $[x\,\,\,u]=[r \,\,\,r]$, for sufficiently small $\lambda > 0$.  
\label{thm1}
\end{theorem}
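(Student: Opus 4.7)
The plan is to recognize \cref{Eq: closed-loop equation} as a singular perturbation problem in which the integral gain $\lambda$ plays the role of the small parameter, with $x$ as the fast state and $u$ as the slow one. To set this up, I would first shift the equilibrium to the origin via $\tilde u := u - r$ and $z := x - u$, which brings the closed loop to
\begin{equation*}
\dot z = g(z,\tilde u) + \lambda(z+\tilde u), \qquad \dot{\tilde u} = -\lambda(z+\tilde u),
\end{equation*}
where $g(z,\tilde u):=f(r+\tilde u+z,\,r+\tilde u)$ satisfies $g(0,\tilde u)\equiv 0$ because $f(u,u)=0$ for every constant $u$. The hypothesis of \cref{thm1} is then precisely that the frozen-$\tilde u$ boundary-layer system $\dot z = g(z,\tilde u)$ is globally exponentially stable at $z=0$, uniformly in $\tilde u$.

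Next, I would invoke a converse Lyapunov theorem for uniform GES of the boundary-layer system to secure a smooth $V(z,\tilde u)$ with the quadratic sandwich $c_1 z^2 \le V \le c_2 z^2$, the decay estimate $(\partial V/\partial z)\,g(z,\tilde u)\le -c_3 z^2$, and the gradient bounds $|\partial V/\partial z|\le c_4|z|$ and $|\partial V/\partial \tilde u|\le c_5|z|$, all holding globally. Because $x$ is scalar, a natural simplification is to use $V=z^2/2$ directly whenever $f$ satisfies the one-sided Lipschitz inequality $z\,f(u+z,u)\le -\gamma z^2$ (the scalar manifestation of uniform GES), which would bypass the converse machinery.

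With $V$ in hand, I would form the composite Lyapunov candidate $\Phi(z,\tilde u) = V(z,\tilde u) + \tfrac{1}{2}\beta\,\tilde u^2$ for a constant $\beta>0$ to be tuned, and compute
\begin{equation*}
\dot\Phi \;\le\; -c_3 z^2 \;+\; \lambda(c_4+c_5)\,|z|\,(|z|+|\tilde u|) \;-\; \beta\lambda\,\tilde u^2 \;+\; \beta\lambda\,|z|\,|\tilde u|.
\end{equation*}
Grouping the right-hand side as a quadratic form in $(|z|,|\tilde u|)$, its $2\times 2$ coefficient matrix has leading entry $c_3-\lambda(c_4+c_5)$ and determinant $\beta\lambda\,c_3 - O(\lambda^2)$, which is positive for all sufficiently small $\lambda>0$ once $\beta$ is chosen of order unity. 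This yields $\dot\Phi \le -\kappa_1 z^2 - \kappa_2\lambda\,\tilde u^2$ globally; combined with the quadratic sandwich on $\Phi$, it gives global exponential decay of $(z,\tilde u)$, and hence of $(x-r,u-r)$.

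The principal obstacle will be the gradient bound $|\partial V/\partial\tilde u|\le c_5|z|$: standard converse Lyapunov theorems deliver smoothness of $V$ in the state, but uniform Lipschitz dependence on the parameter $\tilde u$ typically requires continuous differentiability of $f$ together with a growth bound on $\partial f/\partial u$, so that trajectory sensitivity with respect to $\tilde u$ can be propagated into $V$. A secondary subtlety is globalness of the quadratic estimates on $(z,\tilde u)\in\mathbb{R}^2$, which is what upgrades the singular-perturbation conclusion from semi-global to truly global exponential stability; this likely leverages the strong ``globally exponentially stable'' wording of the hypothesis.
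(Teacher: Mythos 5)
Your proposal is correct and follows essentially the same route as the paper: shift to the error coordinates $u-x$ and $u-r$, invoke a converse Lyapunov theorem for the open-loop step response, form the composite function $V+\tfrac{1}{2}(u-r)^2$, and bound $\dot{\bar V}$ by a quadratic form in the two error magnitudes that is negative definite for sufficiently small $\lambda$; the singular-perturbation framing and the tunable weight $\beta$ are cosmetic. The one substantive difference is that you let $V$ depend on the frozen parameter and carry the bound $|\partial V/\partial \tilde u|\le c_5|z|$ (correctly flagging it as the delicate point), whereas the paper simply assumes $V=V(\bar e)$ with no $u$-dependence and asserts this is not restrictive --- your treatment is the more careful one on that point.
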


\begin{proof}
For the open-loop first order nonlinear system, we transfer the equilibrium of \cref{eq: original nonlinear system} to the origin by defining $\bar{e}=u-x$. Considering a step input $u$, we have 
\begin{equation} \label{Eq: tranlated to closed-loop 1}
\dot{\bar{e}}=-f(u-\bar{e},u)+\dot{u}=\bar{f}(\bar{e},u)
\end{equation}
Per the assumption in \cref{thm1}, the origin $\bar{e} = 0$ is a globally exponentially stable equilibrium in \cref{Eq: tranlated to closed-loop 1}. Hence, using Converse Lyapunov Theorem in \cite{khalil2002nonlinear}, there exists a Lyapunov function $V(\bar{e})$ such that,
\begin{subequations}
\begin{gather}
c_1 |\bar{e}|^2 \leq V(\bar{e}) \leq c_2 | \bar{e}|^2 \label{Eq: upper and lower limits of Vbar}\\
\frac{\partial V}{\partial \bar{e}}\bar{f}(\bar{e},u) \leq -c_3 |\bar{e}|^2 \label{Eq: derivative upper limit}\\
\left|\frac{\partial V}{\partial \bar{e}}\right| \leq c_4 |\bar{e}| \label{Eq: derivative limit}
\end{gather}
\end{subequations}
where, $c_1, \, c_2, \, c_3, \, c_4 > 0$. We note that $V = V(\bar{e})$ (i.e. $V$ has no dependence on $u$) is not restrictive and can be defined for many non-autonomous systems, such as the ones to be discussed in \cref{Affine}. When the system of \cref{eq: original nonlinear system} is compensated as shown in \cref{Fig: Feedback Structure}, $r$ is the step input and $\dot{u} \neq 0$. Hence, the dynamic of $\bar{e}$ takes the form, $\dot{\bar{e}}=-f(u-\bar{e},u)+\dot{u}=\bar{f}(\bar{e},u)+\dot{u}$. Additionally, per \cref{Eq: closed loop system equ}, $e=r-x$, thus $\dot{e}=-\dot{x}$. Hence, from \cref{Eq: closed-loop equation}, we have the following closed-loop equation,
\begin{subequations} \label{Eq: change of variable for eqs 45}
	\begin{align}
	&\dot{\bar{e}}=-f(u-\bar{e},u)+\dot{u}=\bar{f}(\bar{e},u)+\dot{u}\\
	&\dot{u}=\lambda e
	\end{align}
\end{subequations}
To prove the stability of \cref{Eq: change of variable for eqs 45}, the following positive definite function $\bar{V}(\bar{e},e)$ is chosen
\begin{equation} 
\bar{V}(\bar{e},e)=V(\bar{e})+\frac{1}{2}(u-r)^2 = V(\bar{e})+\frac{1}{2}(\bar{e}-e)^2
\label{Eq: mod lyap func}
\end{equation}
In \cref{Eq: mod lyap func}, we note that $u-r=u-x+x-r=\bar{e}-e$. The derivative of $\bar{V}(\bar{e},e)$ along the trajectories 
of \cref{Eq: change of variable for eqs 45} is
\begin{equation*}
\begin{aligned}
\dot{\bar{V}}=\frac{\partial V}{\partial \bar{e}}\dot{\bar{e}} + (u-r)(\dot{u}-\dot{r})= 
\frac{\partial V}{\partial \bar{e}} \left[ \bar{f}(\bar{e},u)+\dot{u} \right]  +\lambda e (\bar{e}-e)  
\end{aligned}
\end{equation*}
Then,
\begin{equation*}
\begin{aligned}
&\dot{\bar{V}}=\frac{\partial V}{\partial \bar{e}} \bar{f}(\bar{e},u)+\frac{\partial V}{\partial \bar{e}}\dot{u}+\lambda e (\bar{e}-e)  
\end{aligned}
\end{equation*}
Therefore, from \cref{Eq: derivative upper limit,Eq: derivative limit},
\begin{equation} \label{Eq: proof inequlity}
\dot{\bar{V}} \leq -c_3 |\bar{e}|^2-\lambda |e|^2+\lambda(1+c_4)|e| |\bar{e}|
\end{equation}
We can express \cref{Eq: proof inequlity} as 
\begin{equation} \label{Eq: matrix format inequlity}
\dot{\bar{V}} = - \left[\begin{array}{cccc}
\bar{e}& e \end{array}\right] \left[\begin{array}{cc} c_3& \frac{\lambda(1+c_4)}{2}\\\frac{\lambda(1+c_4)}{2} &\lambda    \end{array}\right] \left[\begin{array}{cccc}
\bar{e}\\ e \end{array}\right] = -E^T \, Q \, E, \quad E = \left[ \bar{e} \quad e \right]^T
\end{equation}
By choosing $0<\lambda<4c_3/(1+c_4)^2$, $Q$ becomes positive definite. Let $p_{min}$ and $p_{max}$ be the minimum and maximum eigenvalues of $Q$, respectively. Also, it can be shown from \cref{Eq: upper and lower limits of Vbar,Eq: mod lyap func} that $\bar{V} < (c_2 + 1) \lVert E \rVert^2$. Thus,
\begin{equation} \label{Eq: negative definite matrix}
\dot{\bar{V}} = - E^T \, Q \, E \le -p_{min} \lVert E \rVert^2 \le -\frac{p_{min}}{c_2+1} \bar{V}
\end{equation}
where, from Comparison Lemma \cite{khalil2002nonlinear}, $\bar{V}(t)\leq \bar{V}(0)e^{-\frac{p_{min}}{c_2 + 1}t}$. Therefore, the origin $E = [0]$, i.e. $x = u = r$, is globally exponentially stable if $0<\lambda<4c_3/(1+c_4)^2$.
\end{proof}


Next, we study the closed-loop stability of \cref{Eq: closed-loop equation} when the equilibrium $x=x_u=u$ of the original nonlinear system, defined in \cref{eq: original nonlinear system}, is asymptotically stable for step inputs in $u$. 

\begin{theorem} 
Let the nonlinear system of \cref{eq: original nonlinear system} have a unique globally asymptotically stable equilibrium at $x_u=u$, when $u$ is a step input. Then the closed-loop system of \cref{Eq: closed-loop equation}, which is represented by \cref{Fig: Feedback Structure}, has a unique globally asymptotically stable equilibrium at $[x\,\,\,u]=[r \,\,\,r]$ for sufficiently small $\lambda > 0$, if the class $\mathcal{K}$ functions $\alpha_3$ and $\alpha_4$ in \cref{Eq: asymptotic derivative upper limit,Eq: asymptotic derivative limit}, satisfy \cref{Eq: numerator}.
\end{theorem}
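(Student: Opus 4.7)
The plan is to mirror the architecture of the proof of \cref{thm1}, but with the quadratic Lyapunov bounds replaced by bounds in class $\mathcal{K}$ functions, as is appropriate for a globally asymptotically stable (rather than exponentially stable) equilibrium. First, I translate the open-loop equilibrium to the origin via $\bar{e}=u-x$ exactly as in \cref{Eq: tranlated to closed-loop 1}, and invoke the converse Lyapunov theorem for globally asymptotically stable systems (e.g.\ Khalil, Chap.~4) to obtain a smooth $V(\bar{e})$ with bounds of the form $\alpha_1(|\bar{e}|)\le V(\bar{e})\le\alpha_2(|\bar{e}|)$ together with a decay estimate $(\partial V/\partial\bar{e})\,\bar{f}(\bar{e},u)\le -\alpha_3(|\bar{e}|)$ and a gradient estimate $|\partial V/\partial\bar{e}|\le \alpha_4(|\bar{e}|)$ with $\alpha_i$ class $\mathcal{K}$; these are precisely the inequalities referred to as \cref{Eq: asymptotic derivative upper limit,Eq: asymptotic derivative limit}.

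Second, I reuse the composite candidate $\bar{V}(\bar{e},e)=V(\bar{e})+\tfrac{1}{2}(\bar{e}-e)^{2}$ introduced in \cref{Eq: mod lyap func}. Differentiating along \cref{Eq: change of variable for eqs 45} and substituting the gradient estimate into the term $(\partial V/\partial\bar{e})\,\dot{u}=\lambda\,(\partial V/\partial\bar{e})\,e$, one obtains the class $\mathcal{K}$ analogue of \cref{Eq: proof inequlity},
\begin{equation*}
\dot{\bar{V}}\;\le\;-\alpha_{3}(|\bar{e}|)\;-\;\lambda|e|^{2}\;+\;\lambda\bigl(1+\alpha_{4}(|\bar{e}|)\bigr)\,|e|\,|\bar{e}|.
\end{equation*}

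Third, I would split the cross term with a weighted Young's inequality $|e|\,|\bar{e}|\le\tfrac{\beta}{2}|\bar{e}|^{2}+\tfrac{1}{2\beta}|e|^{2}$, choosing $\beta=1+\alpha_{4}(|\bar{e}|)$ so that exactly half of the $-\lambda|e|^{2}$ term is retained. This yields a bound of the shape
\begin{equation*}
\dot{\bar{V}}\;\le\;-\Bigl[\alpha_{3}(|\bar{e}|)-\tfrac{\lambda}{2}\bigl(1+\alpha_{4}(|\bar{e}|)\bigr)^{2}|\bar{e}|^{2}\Bigr]\;-\;\tfrac{\lambda}{2}|e|^{2}.
\end{equation*}
The hypothesis referred to as \cref{Eq: numerator} is exactly what is required to force the bracketed residual to be nonnegative uniformly in $\bar{e}$ once $\lambda$ is small enough, since it compares the growth of $\alpha_{3}$ against that of $(1+\alpha_{4})^{2}\,|\bar{e}|^{2}$ and thereby furnishes a strictly positive upper bound on $\lambda$. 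Under that bound, $\dot{\bar{V}}$ is negative definite on $(\bar{e},e)\neq(0,0)$, and radial unboundedness of $\bar{V}$ together with the Barbashin--Krasovskii theorem gives global asymptotic stability of the origin in $(\bar{e},e)$, equivalently of $[x\;\;u]=[r\;\;r]$.

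The principal obstacle is precisely the third step. In \cref{thm1} the quadratic structure reduced the whole question to the sign of a $2\times 2$ matrix $Q$, and a single scalar inequality $\lambda<4c_{3}/(1+c_{4})^{2}$ sufficed. Here $\alpha_{3}$ and $\alpha_{4}$ can have arbitrary and possibly mismatched growth rates in $|\bar{e}|$, so there is no a priori reason a single value of $\lambda$ should dominate the cross term uniformly as $|\bar{e}|\to\infty$; the stability argument would then collapse to a local statement on a bounded neighborhood of the origin. Assumption \cref{Eq: numerator} encodes the compatibility between $\alpha_{3}$ and $\alpha_{4}$ that rules out this pathology, and the heart of the proof is verifying that under this hypothesis the residual in the bracket above remains non-positive globally, so that the conclusion is genuinely global rather than local.
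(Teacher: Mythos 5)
Your overall strategy matches the paper's: translate the open-loop equilibrium to the origin, invoke the converse Lyapunov theorem with class $\mathcal{K}$ bounds, augment with $\tfrac{1}{2}(u-r)^2$, and then look for a uniform-in-$\bar{e}$ upper bound on $\lambda$ whose existence is guaranteed by \cref{Eq: numerator}. However, there is a concrete error in your second step that breaks the link to \cref{Eq: numerator}. The gradient estimate gives $\bigl|(\partial V/\partial\bar{e})\,\dot{u}\bigr| = \lambda\,\bigl|(\partial V/\partial\bar{e})\bigr|\,|e| \le \lambda\,\alpha_4(|\bar{e}|)\,|e|$, so the correct analogue of \cref{Eq: proof inequlity} is
\begin{equation*}
\dot{\bar{V}} \;\le\; -\alpha_3(|\bar{e}|) \;-\; \lambda|e|^2 \;+\; \lambda\bigl[\alpha_4(|\bar{e}|) + |\bar{e}|\bigr]\,|e|,
\end{equation*}
not $-\alpha_3 - \lambda|e|^2 + \lambda\bigl(1+\alpha_4(|\bar{e}|)\bigr)|e|\,|\bar{e}|$. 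You have pattern-matched to \cref{thm1}, where $|\partial V/\partial\bar{e}|\le c_4|\bar{e}|$ makes the cross term $c_4|\bar{e}|\,|e|$; in the asymptotic case $\alpha_4(|\bar{e}|)$ already replaces the whole quantity $c_4|\bar{e}|$, so no extra factor of $|\bar{e}|$ appears. Because of this, your Young's-inequality split with $\beta = 1+\alpha_4(|\bar{e}|)$ produces the residual $\alpha_3(|\bar{e}|) - \tfrac{\lambda}{2}\bigl(1+\alpha_4(|\bar{e}|)\bigr)^2|\bar{e}|^2$, whose nonnegativity is \emph{not} what \cref{Eq: numerator} asserts: that hypothesis bounds $\alpha_3(|\bar{e}|)$ below by $\sigma\bigl\{\tfrac14[\alpha_4(|\bar{e}|)+|\bar{e}|]^2 + \alpha_4(|\bar{e}|)\,|\bar{e}|\bigr\}$, and for large $|\bar{e}|$ the quantity $(1+\alpha_4)^2|\bar{e}|^2$ generally grows strictly faster than $[\alpha_4+|\bar{e}|]^2$, so the stated hypothesis does not imply your bracket is sign-definite. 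You also explicitly defer ("the heart of the proof is verifying\dots") the one step that actually has to be carried out.

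For comparison, the paper works in the coordinates $(\bar{e},\,r-u)$ rather than $(\bar{e},\,e)$: substituting $\dot{u}=\lambda e=\lambda(\bar{e}+(r-u))$ yields
\begin{equation*}
\dot{\bar{V}} \;\le\; -\Bigl[\alpha_3(|\bar{e}|) - \lambda\,\alpha_4(|\bar{e}|)\,|\bar{e}|\Bigr] \;-\; \lambda(r-u)^2 \;+\; \lambda\Bigl[\alpha_4(|\bar{e}|)+|\bar{e}|\Bigr]\,|r-u|,
\end{equation*}
which is a concave quadratic in $|r-u|$; requiring its discriminant to be negative for all $\bar{e}\neq 0$ gives exactly $\lambda < \alpha_3(|\bar{e}|)\big/\bigl\{\tfrac14[\alpha_4(|\bar{e}|)+|\bar{e}|]^2 + \alpha_4(|\bar{e}|)\,|\bar{e}|\bigr\}$, and \cref{Eq: numerator} is precisely the statement that the right-hand side admits a positive lower bound $\sigma$, so that any $0<\lambda<\sigma$ works. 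If you prefer to stay with the corrected inequality in $(\bar{e},e)$, the same discriminant argument gives the condition $\lambda < 4\alpha_3(|\bar{e}|)/[\alpha_4(|\bar{e}|)+|\bar{e}|]^2$, which is implied by \cref{Eq: numerator}; but either way you must first fix the cross term and then actually tie the resulting condition to the stated hypothesis rather than asserting the connection.
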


\begin{proof}
With $\bar{e}=u-x$, for a step input in $u$, $\dot{u} = 0$, and thus \cref{eq: original nonlinear system} can be expressed as
\begin{equation} \label{Eq: Asymptotic system}
    \dot{\bar{e}}=-f(u-\bar{e},u)+\dot{u}=\bar{f}(\bar{e},u), \quad \mbox{where}, \quad \bar{f}(\bar{e},u) = -f(u-\bar{e},u)
\end{equation}
Based on the assumption that the original system is asymptotically stable at its unique equilibrium point $x = x_u = u$, \cref{Eq: Asymptotic system} is globally asymptotically stable at the origin $\bar{e} = 0$. Therefore, from converse Lyapunov stability theorems in \cite{khalil2002nonlinear}, there exists a Lyapunov function $V$ such that
\begin{subequations}
\begin{gather}
\alpha_1(|\bar{e}|)  \leq V(\bar{e}) \leq \alpha_2(|\bar{e}|) \label{Eq: asymtotic upper and lower limits of Vbar}\\
\frac{\partial V}{\partial \bar{e}}\bar{f}(\bar{e},u) \leq-\alpha_3(|\bar{e}|)\label{Eq: asymptotic derivative upper limit}\\
\left|\frac{\partial V}{\partial \bar{e}}\right| \leq \alpha_4 (|\bar{e}|) \label{Eq: asymptotic derivative limit}
\end{gather}
\end{subequations}
where $\alpha_1, \alpha_2, \alpha_3$ and $\alpha_4$ are class $\mathcal{K}$ functions. We define the following positive definite, radially unbounded Lyapunov function for the closed-loop system, $\bar{V}=V(\bar{e})+\frac{1}{2}(r-u)^2$. For the closed-loop system, $\dot{u} \ne 0$. Therefore, taking the derivative of $\bar{V}$ along system trajectories for a step input in $r$ we get,
\begin{equation} \label{Eq: Asymptotic Lyapunov}
\dot{\bar{V}}= \frac{\partial V}{\partial \bar{e}} \bar{f}(\bar{e},u) + \frac{\partial V}{\partial \bar{e}}\dot{u}+\lambda e(u-r) 
\end{equation}
Noting that $e = \left( \bar{e} + (r-u) \right)$ and $\dot{u} = \lambda e = \lambda \left( \bar{e} + (r-u) \right)$, \cref{Eq: Asymptotic Lyapunov} can be rewritten as 
\begin{equation} \label{Eq: Asumptotic Case}
\begin{aligned}
\dot{\bar{V}} &= \frac{\partial V}{\partial \bar{e}}\bar{f}(\bar{e},u) +\frac{\partial V}{\partial \bar{e}} \lambda (r-u)+ \frac{\partial V}{\partial \bar{e}} \lambda \bar{e} -\lambda (r-u)^2+\lambda \bar{e} (u-r) \\
&\leq -\alpha_3(|\bar{e}|)+ \alpha_4(|\bar{e}|)\lambda |r-u|+ \alpha_4(|\bar{e}|) \lambda |\bar{e}|-\lambda (r-u)^2 +\lambda |\bar{e}||r-u| 
\end{aligned}
\end{equation}
Equation (\ref{Eq: Asumptotic Case}) can be expressed as
\begin{equation*}
\dot{\bar{V}}\leq -\Bigl[\alpha_3(|\bar{e}|)- \alpha_4(|\bar{e}|)\lambda |\bar{e}|\Bigr]-\lambda (r-u)^2+\Bigl[\alpha_4(|\bar{e}|)+|\bar{e}|\Bigr]\lambda |r-u|
\end{equation*}
The expression above is quadratic in $|r-u|$. Hence, to ensure $\dot{\bar{V}} < 0$ we can impose,
\begin{equation*} \label{Eq: b2-4ac}
\lambda^2 \Bigl[\alpha_4(|\bar{e}|)+|\bar{e}|\Bigr]^2 - 4\lambda \Bigl[\alpha_3(|\bar{e}|)-\alpha_4(|\bar{e}|)\lambda |\bar{e}| \Bigr]  < 0 \quad \forall \; \bar{e} \ne 0,
\end{equation*}
which holds if $\lambda$ satisfies
\begin{equation} \label{eq:lambda limits}
0 < \lambda < \frac{\alpha_{3}(|\bar{e}|)} {\frac{1}{4}\Bigl[\alpha_{4}(|\bar{e}|) + |\bar{e}|\Bigr]^2 + \alpha_{4}(|\bar{e}|)|\bar{e}|}
\end{equation}
The right hand side of the above inequality must have a lower limit for all $\bar{e}$, so that a fixed upper limit of $\lambda$ can be defined for all $\bar{e}$. To this end, we impose that there exists a constant $\sigma > 0$ such that the following holds for all $|\bar{e}|$,
\begin{equation} \label{Eq: numerator}
    0 < \sigma < \frac{\alpha_{3}(|\bar{e}|)} {\frac{1}{4}\Bigl[\alpha_{4}(|\bar{e}|) + |\bar{e}|\Bigr]^2 + \alpha_{4}(|\bar{e}|)|\bar{e}|} \quad \Rightarrow \quad \alpha_{3}(|\bar{e}|) > \sigma \Bigl\{ \frac{1}{4}\Bigl[\alpha_{4}(|\bar{e}|) + |\bar{e}|\Bigr]^2 + \alpha_{4}(|\bar{e}|)|\bar{e}| \Bigr\}
\end{equation}
If the class $\mathcal{K}$ functions $\alpha_3$ and $\alpha_4$ satisfy \cref{Eq: numerator}, then choosing $0 < \lambda < \sigma$ ensures $\dot{\bar{V}} < 0$ and hence guarantees asymptotic stability of the equilibrium $x=u=r$.
\end{proof}
We end this section by noting that the proposed compensation of \cref{Fig: Feedback Structure} is one of many possible candidate compensators, including ones with more complex controllers as well as ones with pre-filters. However, an advantage of the compensation of \cref{Fig: Feedback Structure} is its simplicity.

\subsection{First Order Affine Nonlinear System} \label{Affine}

We now consider a special case, where the first order nonlinear system is affine in $u$ and has a unique exponentially stable equilibrium at $x = x_u = u$ for step inputs. The system dynamics is,
\begin{equation} \label{Eq: affine format of system}
\dot{x} = f(x,u) = h(x)+g(x)u
\end{equation}
Since $x = u$ is a unique equilibrium, we have
\begin{equation}
   h(u)+g(u)u=0 \; \Rightarrow \; h(u)=-g(u)u \; \Rightarrow \; h(x)=-g(x)x \quad \forall \; x \in R
\end{equation}
Therefore, \cref{Eq: affine format of system} can be written as
\begin{equation} \label{Eq: simplified format of fxu}
	\dot{x}=-g(x)(x-u)
\end{equation}
The uniqueness of the equilibrium at $x=u$ implies $g(x) \ne 0 \; \forall \; x \in R$. Additionally, if $f(x,u)$ is continuous in $x$, then so is $g(x)$. Since $g(x)$ is non-zero, its continuity implies that it is either positive or negative for all $x\in R$. If $g(x)$ satisfies the bounds $0 < b \le g(x) \le a$, then by defining the Lyapunov function candidate $V = 0.5\bar{e}^2$, where $\bar{e} = (u-x)$, we have
\begin{equation}
    \dot{V} = -\dot{x}\bar{e} = -g(x)\bar{e}^2 \le -b\bar{e}^2
\end{equation}
Therefore, from the Theorem 4.10 of  \cite{khalil2002nonlinear}, we conclude global exponential stability of $\bar{e} = 0$. Referring to \cref{Fig: Feedback Structure}, the closed-loop state space dynamic is
\begin{equation} \label{Eq: overall state space}
\begin{aligned}
\dot{x}= -g(x)(x-u)\\
\dot{u}= \lambda (r-x)
\end{aligned}
\end{equation}
From \cref{thm1}, we conclude that for sufficiently small $\lambda > 0$, the closed-loop system of \cref{Eq: overall state space} will be exponentially stable at $[x\,\,\,u]=[r \,\,\,r]$. For the affine system of \cref{Eq: affine format of system} and the Lyapunov function $V$ above, we can determine values of parameters $c_1$, $c_2$, $c_3$ and $c_4$, appearing in \cref{thm1}.

\section{Extension to Higher Order Systems}
\label{sec_high}
In this section, we investigate integral constraint of \cref{eq_eta3} applied to higher order nonlinear SISO systems. We consider general higher order SISO nonlinear systems given by
\begin{equation} \label{General Nonlinear System}
\dot{x}=f(x,u), \quad  y=h(x)
\end{equation}
where $x \in R^n$, $f(x,u): R^n \times R \rightarrow R^n$ and $h(x): R^n \rightarrow R$ are sufficiently smooth functions. We assume this nonlinear system has a definite relative degree $\rho$ satisfying $1\leq \rho \leq n$, \cite{Seborg1996Nonlinear}, and that for any step input $u\in R$, the system has a unique equilibrium at $x=x_u$ such that $f(x_u,u)=0$ and $\lim_{t \rightarrow \infty} y=h(x_u)= u$. Equation (\ref{General Nonlinear System}) is expressed in the {\it normal form} \cite{khalil2002nonlinear}, using the states $\xi = [\xi_1, \, \xi_2, \cdots \, \xi_\rho]^T$ and $\eta = [\eta_1, \, \eta_2, \cdots \, \eta_{n - \rho}]^T$ as follows,
\begin{equation} \label{Eq: Linearized form of Nonlinear System}
\begin{aligned}
y \hspace{1mm}&= h(x) = \xi_1\\
\dot{\xi}_1 &= L_fh= \xi_2 \\
\dot{\xi}_2 &= L^2_f h =\xi_3\\
& \,\,\,\,\, \vdots\\
\dot{\xi}_{\rho} & = L^{\rho}_f h=f_1(\xi,\eta,u)\\
\dot{\eta} \hspace{1mm} &= f_0(\xi,\eta, u) \in R^{(n-\rho)}
\end{aligned}
\end{equation}
where $L_fh(x)$ is the Lie derivative of function $h(x)$ along $f(x,u)$. We assume that the mapping between $x$ and $[\xi^T, \,\, \eta^T]^T$ represents a valid diffeomorphism. In \cref{Eq: Linearized form of Nonlinear System}, $\xi_i$ for all $i=1,2,\cdots,\rho$ represent a chain of integrators, and $\eta \in R^{(n - \rho)}$ represent the internal states. From \cref{Eq: Linearized form of Nonlinear System}, at steady state
\begin{equation}
\xi_1 = u, \,\, \xi_2=0, \,\, \cdots \,\, \xi_{\rho}=0 
\end{equation}
Further, we assume that the internal dynamic $\dot{\eta} = f_0(\xi,\eta, u)$ is stable for any step input in $u$, and $\lim_{t \rightarrow \infty} \eta = \eta_u$. The nonlinear system of \cref{Eq: Linearized form of Nonlinear System} can be expressed as
\begin{equation}
    \begin{aligned}
     & \dot{\xi} =A_c \xi + B_c f_1 (\xi,\eta,u)\\
     & \dot{\eta} =f_0(\xi, \eta, u)\\
     & y =C_c \xi
    \end{aligned}
\end{equation}
where
\begin{equation}
A_c = \left[
\begin{array}{ccccc}
0 & 1 & 0 & \cdots & 0 \\
0 & 0 & 1 & \cdots & 0 \\
\vdots & & \ddots & & \vdots \\
\vdots & & & 0 & 1 \\
0 & \cdots & \cdots & 0 & 0
\end{array}
\right], 
\quad B_c= \left[ \begin{array}{c} 0 \\ 0 \\ \vdots \\ 0 \\ 1 \end{array} \right],
\quad C_c=\left[ \begin{array}{ccccc} 1&0&\hdots&0&0 \end{array} \right]
\label{eq_normalform}
\end{equation}
Defining
\begin{equation}
  \left[ \begin{array}{c} \bar{e}\\\bar{e}_{\eta} \end{array} \right]= \left[ \begin{array}{c} u\\0\\\vdots\\0\\---\\\eta_u \end{array} \right]- \left[ \begin{array}{c} \xi\\ --- \\ \eta \end{array} \right] 
\end{equation}
we have,
\begin{equation}
 \left[ \begin{array}{c} \dot{\bar{e}}\\\dot{\bar{e}}_{\eta} \end{array} \right] = \left[ \begin{array}{c} C^T_c \\---\\ d \eta_u/{d u} \end{array} \right]\dot{u}\, - \, \left[ \begin{array}{c} \dot{\xi}\\ ---\\ \dot{\eta} \end{array} \right]
\end{equation}
Here, $\bar{e} = [\bar{e}_1, \, \bar{e}_2, \, \cdots,  \bar{e}_\rho]^T$, $\bar{e}_{1}=u-\xi_1=u-y$, $\bar{e}_{2}=-\xi_2$, $\cdots$, $\bar{e}_{\rho}=-\xi_{\rho}$. Further, the term $d \eta_u/{d u}$ represents the rate of change of the steady-state $\eta_u$ with $u$. Since we consider $u$ is a step input, therefore $\dot{u}=0$ and $\dot{\bar{e}}_{1}= \bar{e}_{2}$, $\dot{\bar{e}}_{2}=\bar{e}_{3}$, $\cdots$, $\dot{\bar{e}}_{\rho} = -\dot{\xi}_\rho = - f_1(\xi,\eta,u) = -\bar{f}_1(\bar{e},\bar{e}_\eta,u)$,
$\dot{\bar{e}}_{\eta} = - \dot{\eta} = - f_0(\xi,\eta,u) = \underline{f_0} (\bar{e},\bar{e}_{\eta},u)$.
The dynamic system of \cref{General Nonlinear System}, based on new variables, can be expressed as
\begin{equation}
    \dot{\bar{e}} = \underline{f_1}(\bar{e},\bar{e}_\eta, u), \;
    \dot{\bar{e}}_\eta =\underline{f_0} (\bar{e}, \bar{e}_\eta,u), \;
    y = C_c (u-\bar{e}_1)
    \label{eq_closedlooporigin}
\end{equation}
where in \cref{eq_closedlooporigin} $\underline{f_1}(\bar{e},\bar{e}_\eta, u) = \left[ \bar{e}_2, \,\, \bar{e}_3, \,\, \cdots, -\bar{f}_1(\bar{e}, \bar{e}_\eta, u) \right]^T$. We assume that the equilibrium $[\bar{e}^T,\bar{e}_\eta^T] =0$ is exponentially stable and, based on converse Lyapunov theorem \cite{khalil2002nonlinear}, there exists a Lyapunov function $V(\bar{e},\bar{e}_\eta)$ such that
\begin{equation}
\begin{array}{c}
c_1 \bigg\| \begin{array}{c} \bar{e}\\\bar{e}_{\eta} \end{array} \bigg\|^2 \leq V(\bar{e},\bar{e}_\eta) \leq c_2 \bigg\| \begin{array}{c} \bar{e}\\\bar{e}_{\eta} \end{array} \bigg\|^2 \\ \\

\left[ \begin{array}{cc} \frac{\partial V}{\partial \bar{e}} & \frac{\partial V}{\partial \bar{e}_\eta} \end{array} \right] 
\left[ \begin{array}{c} \underline{f_1}(\bar{e},\bar{e}_\eta, u) \vspace{0.1in} \\
\underline{f_0}(\bar{e},\bar{e}_\eta, u) \end{array} \right] \leq -c_3 \bigg\| \begin{array}{c} \bar{e}\\\bar{e}_{\eta} \end{array}  \bigg\|^2 \\ \\

\bigg\| \begin{array}{c} \frac{\partial V}{\partial \bar{e}} \vspace{0.1in} \\ \frac{\partial V}{\partial \bar{e}_\eta} \end{array} \bigg\| \leq 
c_4 \bigg\| \begin{array}{c} \bar{e}\\\bar{e}_{\eta} \end{array} \bigg\|
\end{array}
\end{equation}
where, $c_1, \, c_2, \, c_3, \, c_4 > 0$. Now consider the aforementioned system in a closed-loop structure depicted in \cref{Fig: Feedback Structure}. The closed-loop dynamic equation becomes
\begin{equation}
\begin{aligned}
    \dot{\bar{e}} &= \underline{f_1}(\bar{e},\bar{e}_\eta, u) + C^T_c\dot{u} \\
    \dot{\bar{e}}_\eta &= \underline{f_0} (\bar{e}, \bar{e}_\eta,u) + \frac{d \eta_u}{d u}\dot{u} \\
    y &= C_c\xi = \xi_1 = (u-\bar{e}_1) \\
    \dot{u} &= \lambda (r-y) = \lambda e
    \end{aligned}
\end{equation}
where $C_c$ is defined in \cref{eq_normalform}. For the closed-loop system, we define the Lyapunov function $\bar{V}(\bar{e},\bar{e}_\eta,e) = V(\bar{e},\bar{e}_\eta)+\frac{1}{2}(r-u)^2 = V(\bar{e},\bar{e}_\eta)+\frac{1}{2}(e-\bar{e}_1)^2$. By considering $r$ as a step input in the closed-loop scenario, the derivative of $\bar{V}$ along the trajectories of $\left[ \bar{e}^T \,\, \bar{e}_\eta^T \,\, e \right]$ results in,
\begin{equation*}
    \begin{aligned}
    \dot{\bar{V}} &= \dot{V}(\bar{e},\bar{e}_\eta)+(r-u)(-\dot{u})= \frac{\partial V}{\partial{\bar{e}}}\dot{\bar{e}}+
    \frac{\partial V}{\partial{\bar{e}_\eta}}  \dot{\bar{e}}_\eta - \lambda e (e-\bar{e}_1)\\
    &= \left[ \begin{array}{cc} \frac{\partial V}{\partial{\bar{e}}}& \frac{\partial V}{\partial \bar{e}_\eta} \end{array} \right] \left[ \begin{array}{c} \underline{f_1}(\bar{e},\bar{e}_\eta, u) \vspace{0.03in} \\ \underline{f_0} (\bar{e}, \bar{e}_\eta,u) \end{array} \right] + \left[ \begin{array}{cc} \frac{\partial V}{\partial{\bar{e}}}& \frac{\partial V}{\partial \bar{e}_\eta} \end{array} \right] \left[ \begin{array}{c} C^T_c \vspace{0.03in} \\ d \eta_u/{d u} \end{array} \right]\dot{u} -\lambda e^2 + \lambda e \bar{e}_1
    \end{aligned}
\end{equation*}
Hence,
\begin{equation*}
    \begin{aligned}
    \dot{\bar{V}} & \leq - c_3 \bigg\| \begin{array}{c} \bar{e}\\ \bar{e}_{\eta} \end{array} \bigg\|^2 +\lambda \left( 1 + c_4 \bigg\| \begin{array}{c} C^T_c \vspace{0.03in} \\ d \eta_u/{d u} \end{array} \bigg\| \right)\, |e| \, \bigg\| \begin{array}{c} \bar{e}\\ \bar{e}_{\eta} \end{array} \bigg\| - \lambda |e|^2 \\
    & \leq - c_3 \bigg\| \begin{array}{c} \bar{e}\\ \bar{e}_{\eta} \end{array} \bigg\|^2 +\lambda \Bigl[ 1 + c_4 \left( 1 + \big\| d \eta_u/{d u} \big\| \right) \Bigr] \, |e| \, \bigg\| \begin{array}{c} \bar{e}\\\bar{e}_{\eta} \end{array} \bigg\| - \lambda |e|^2
    \end{aligned}
\end{equation*}
Now similar to the process in \cref{Eq: proof inequlity,Eq: matrix format inequlity}, we can show that the equilibrium $\left[ \bar{e}^T \,\, \bar{e}_\eta^T \,\, e \right] = 0$ is exponentially stable for $0< \lambda < {4c_3}/{ \left[ 1+c_4\left( 1 + \big\| d \eta_u/{d u} \big\|_{max} \right) \right]^2}$. The above analysis extends the application of the compensation structure in \cref{Fig: Feedback Structure} to higher order nonlinear systems.

\subsection{Simulation}
We consider the following nonlinear mass-spring-damper system,
\begin{equation}
    \ddot{x}+2\zeta \omega_n \dot{x} +[1+f\left(x-u\right)]\omega_n^2 (x-u)=0, \quad y = x,
\label{eq_ex11}
\end{equation}
where $0 < \zeta < 1$, $\omega_n > 0$, and the function $f(z)$ is
\begin{equation}
    f(z) = \Bigg\{\begin{array}{cc}
         z^2 \quad \textit{for} \quad z < \beta \\
         \beta^2 \quad \textit{for} \quad z \geq \beta
    \end{array}
\label{eq_ex12}
\end{equation}
Equation (\ref{eq_ex11}) represents the motion of a damped mass which is excited via a nonlinear spring, the spring coefficient of which is given by $[1+f\left(z\right)]\omega_n^2$. To analyze the stability of the equilibrium $x = u$ (DC gain = 1), $\dot{x} = 0$ for a step input in $u$, we consider the Lyapunov function 
\begin{equation}
V = \frac{1}{2} \left( \dot{e}_1 + \zeta \omega_n e_1 \right)^2 + \frac{1}{2} e_1^2\omega_d^2 + \int_{0}^{e_1} \omega_n^2 f(z) z dz
\end{equation}
where $e_1 = x - u$, $\dot{e}_1 = \dot{x}$ for a step input in $u$ and $\omega_d = \omega_n \sqrt{1 - \zeta^2}$. Since
\begin{equation*}
    0 < \int_0^{e_1} \omega_n^2 f(z)zdz < \int_{0}^{e_1} \omega_n^2 \beta^2 zdz = \omega_n^2 \beta^2 \frac{e_1^2}{2}
\end{equation*}
we have,
\begin{equation}
\begin{array}{rcl}
    \frac{1}{2} (\dot{e}_1 +\zeta \omega_n e_1)^2 + \frac{1}{2} e_1^2 \omega_d^2 \; \leq & V & \leq \; \frac{1}{2} (\dot{e}_1+\zeta \omega_n e_1)^2 + \frac{1}{2} e_1^2 \left( \omega_d^2 + \omega_n^2 \beta^2 \right) \\
    \Rightarrow c_1 \bigg\| 
    \begin{array}{c} e_1 \\ \dot{e}_1 \end{array}
    \bigg\|^2 \leq & V & \leq c_2 \bigg\| 
    \begin{array}{c} e_1 \\ \dot{e}_1 \end{array}
    \bigg\|^2
\end{array}
\end{equation}
where it can be shown that $c_1 = 0.5\left( (1 + \omega^2_n) - \sqrt{(1 + \omega^2_n)^2 - 4\omega^2_n(1-\zeta^2)} \right) > 0$ and $c_2 = 0.5\left( 1 \cdots \right.$ $\left. + \, \omega^2_n(1+\beta^2)\right) > 0$. From \cref{eq_ex11}, using $e_1 = x - u$ and $\dot{e}_1 = \dot{x}$, we have
\begin{equation}
    \ddot{e}_1+2\zeta \omega_n \dot{e}_1 +[1+f\left(e_1\right)]\omega_n^2 e_1 =0,
\label{eq_ex13}
\end{equation}
where $f(\cdot)$ is defined in \cref{eq_ex12}. Taking the derivative of $V$ along the trajectories of \cref{eq_ex13},
\begin{equation}
\begin{array}{rcl}
    \dot{V} & = & (\ddot{e}_1 + \zeta \omega_n \dot{e}_1)(\dot{e}_1 + \zeta \omega_n e_1) + \omega_d^2 e_1\dot{e}_1 + \omega_n^2 f(e_1)e_1\dot{e}_1 \\
    & = & -\xi \omega_n [(\dot{e}_1+\zeta \omega_n e_1)^2 + e_1^2\omega_d^2] - \zeta\omega_n^3 f(e_1)e_1^2 \\
    & \le & -\xi \omega_n [(\dot{e}_1+\zeta \omega_n e_1)^2 + e_1^2\omega_d^2] < -c_3 \bigg\| 
    \begin{array}{c} e_1 \\ \dot{e}_1 \end{array}
    \bigg\|^2
\end{array}
\end{equation}
where $c_3 = \zeta\omega_n c_1 > 0$. We also note that
\begin{equation}
    \left[\begin{array}{cccc}
         \frac{\partial V}{\partial  e_1}  \vspace{0.05in} \\
         \frac{\partial V}{\partial \dot{e}_1}
    \end{array}\right] =
    M \left[\begin{array}{cc}
         e_1  \\
         \dot{e}_1 
    \end{array} \right] \leq c_4 \bigg\| \begin{array}{cc}
         e_1  \\
         \dot{e}_1 
    \end{array} \bigg\|, \quad M = \left[\begin{array}{cccc}
         \omega_n^2\left(1 + f(e_1)\right) & \zeta \omega_n\\
         \zeta \omega_n & 1
    \end{array}\right]
\end{equation}
and an estimate of $c_4$ is $c_4 = \Vert M \Vert_2 = \sigma_{max}\left( M \right) \le \sqrt{1 + \zeta^2\omega_n^2 + \omega_n^4\left(1 + \beta^2 \right)^2}$, where $\sigma_{max}\left( M \right)$ is the maximum singular value of $M$. From the above analysis, we conclude that for a step input in $u$, $e_1 = \dot{e}_1 = 0$ is an exponentially stable equilibrium, see \cite{khalil2002nonlinear}. Equation (\ref{eq_ex11}) belongs to the category of systems considered in \cref{sec_high} with $\xi_1 = y = x$ and $\xi_2 = \dot{x}$ and a relative degree $\rho = 2$, implying there are no internal state $\eta$. We concluded in \cref{sec_high} that when it is placed in the feedback-feedforward configuration of \cref{Fig: Feedback Structure}, it has an exponentially stable equilibrium at $y = r = u$ and $\dot{e}_1 = 0$ for the integral gain $\lambda$ satisfying $0 < \lambda < {4c_3}/{ \left[ 1+c_4\left( 1 + \big\| d \eta_u/{d u} \big\|_{max} \right) \right]^2} = {4c_3}/{\left[ 1+c_4 \right]^2}$, since there is no internal state $\eta$.

To demonstrate the effect of $\lambda$ on the closed-loop stability, we simulate the system in \cref{eq_ex11} with $\omega_n = 1$rad/s, $\zeta= 0.7$ and $\beta^2 = 20$. 
\begin{figure}[htbp]  
	\begin{center} 
		\includegraphics[width=0.95\textwidth]{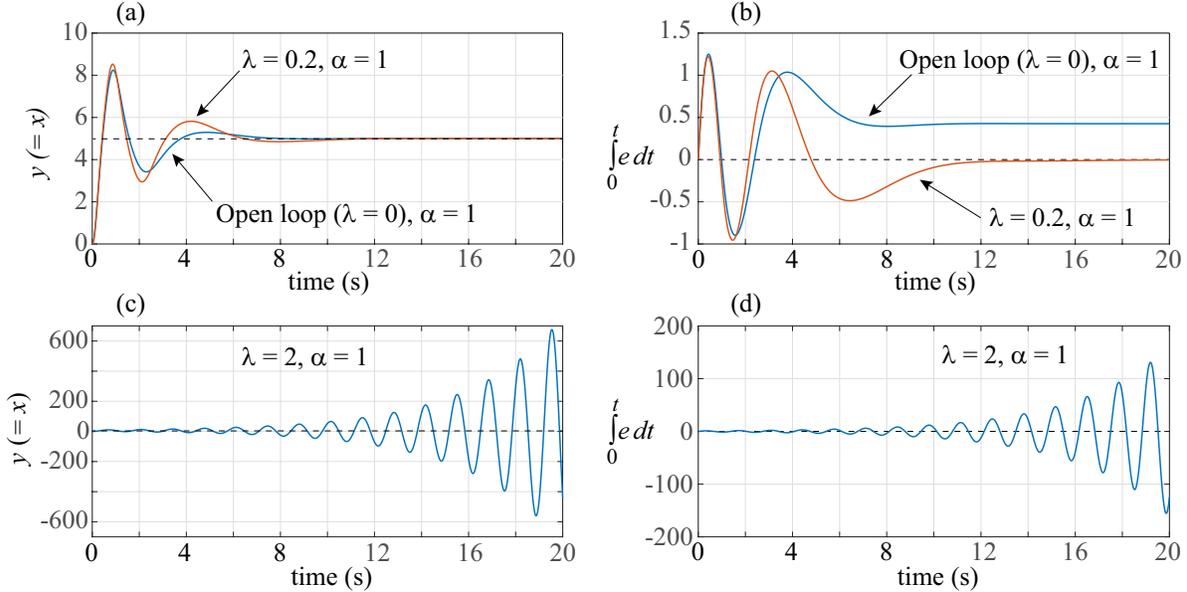}
	\end{center}
	\vspace{-0.25in}
    \caption{Step response and $e_r = \lim_{t\rightarrow \infty} \int_{0}^{t} e(\tau)d\tau$ for \cref{eq_ex11} under compensation of \cref{Fig: Feedback Structure}}
    \vspace{-0.05in}
	\label{Fig: ExpExampleTwoStateSystem}
	\vspace{-0.0in}
\end{figure}
The analysis above gives $c_1 = 0.3$, $c_3 = 0.21$, and $c_4 \approx 21$, yielding $0 < \lambda < 0.0017$ as a sufficient condition for closed-loop stability. This is a conservative estimate of $\lambda$. However, from linearization of the closed-loop system around $x = r$, which is,
\begin{equation*}
    \dddot{x} + 2\zeta\omega_n\ddot{x} + \omega_n^2\dot{x} + \omega_n^2\lambda(x - r) = 0
\end{equation*}
the necessary and sufficient condition for local stability, using the Routh-Hurwitz criterion, is $0 < \lambda < 2\zeta\omega_n = 1.4$. Simulation results are depicted in \cref{Fig: ExpExampleTwoStateSystem} for the open-loop case (i.e. $\lambda=0$) and for $\lambda = 0.2, 2$, with $\alpha = 1( = 1/\mbox{DC gain})$.

In \cref{Fig: ExpExampleTwoStateSystem}(a), the step response of the closed-loop system, when $\lambda=0.2$, exponentially converges to step input $r = 5$, as does the open-loop system ($\lambda = 0$). In addition, \cref{Fig: ExpExampleTwoStateSystem}(b) illustrates that $\int_{0}^{\infty} e(\tau)d\tau \ne 0$ for $\lambda = 0$, and $\int_{0}^{\infty} e(\tau)d\tau=0$ for $\lambda = 0.2$. However, when $\lambda=2$ the output and integral constraint values diverge, as shown in \cref{Fig: ExpExampleTwoStateSystem}(c),(d), since the system is unstable for $\lambda=2 > 1.4$.

\vspace{-0.15in}
\section{Conclusion}
\vspace{-0.1in}
This paper demonstrates a method of shaping the transient step response of nonlinear systems using the compensation structure of \cref{Fig: Feedback Structure}, to satisfy a class of integral constraints. The method is applicable to SISO nonlinear systems that provide a stable step response and have a positive DC gain. Such nonlinear systems are plentiful, including hybrid energy systems where the aforementioned integral constraints are key to power management. First, the ability of the compensated system to satisfy the integral constraints while tracking step inputs is proven. Thereafter, analysis of first order nonlinear plants utilizing {\it Converse Lyapunov Theorems} establishes bounds on the integrator gain $\lambda$ where the closed-loop system is stable. The analysis assumes that the nonlinear plant has an exponentially or asymptotically stable response to step inputs, but does not require a detailed plant model. For higher order SISO nonlinear systems, the {\it Normal Form} is used to convert the original system to a chained-form of integrators and an internal dynamic. Here again, by assuming broad stability characteristics of the unknown plant, {\it Converse Lyapunov Theorems} are used to establish conditions for stability of the compensated system. Simulations demonstrate the efficacy of the study. 

\vspace{-0.1in}

\end{document}